\newenvironment{noindlist}
 {\begin{list}{\labelitemi}{\leftmargin=0em \itemindent=1em}}
 {\end{list}}
\author{%
  Gregory Dexter \\
  LinkedIn\\
  \texttt{gdexter@linkedin.com} \\
  % examples of more authors
  \And
  Shao Tang \\
  LinkedIn\\
  \texttt{shatang@linkedin.com} \\
  \AND
  Ata Fatahibaarzi \\
  LinkedIn\\
  \texttt{afatahibaarzi@linkedin.com} \\
  \And
  Qingquan Song \\
  LinkedIn\\
  \texttt{ustcsqq@gmail.com} \\
  \And
  Tejas Dharamsi \\
  LinkedIn\\
  \texttt{tdharamsi@linkedin.com} \\
  \And
  Aman Gupta \\
  Nubank\\
  \texttt{aman.gupta@nubank.com.br} \\
}
\title{LLM Query Scheduling with Prefix Reuse \\and Latency Constraints}
\begin{document}

\maketitle

% this must go after the closing bracket ] following \twocolumn[ ...

% This command actually creates the footnote in the first column
% listing the affiliations and the copyright notice.
% The command takes one argument, which is text to display at the start of the footnote.
% The \icmlEqualContribution command is standard text for equal contribution.
% Remove it (just {}) if you do not need this facility.

\newcommand{\st}[1]{{\color{orange} #1}}
\newcommand{\ata}[1]{{\color{brown} #1}}
\newcommand{\qq}[1]{{\color{blue} #1}}
\newcommand{\aman}[1]{{\color{red} #1}}

% Uncomment to hide comments
% \renewcommand{\textcolor}[2]{}
% \renewcommand{\st}[2]{}

\begin{abstract}
The efficient deployment of large language models (LLMs) in online settings requires optimizing inference performance under stringent latency constraints, particularly the time-to-first-token (TTFT) and time-per-output-token (TPOT). This paper focuses on the query scheduling problem for LLM inference with prefix reuse, a technique that leverages shared prefixes across queries to reduce computational overhead. Our work reveals previously unknown limitations of the existing first-come-first-serve (FCFS) and longest-prefix-match (LPM) scheduling strategies with respect to satisfying latency constraints. We present a formal theoretical framework for LLM query scheduling under RadixAttention, a prefix reuse mechanism that stores and reuses intermediate representations in a radix tree structure. Our analysis establishes the NP-hardness of the scheduling problem with prefix reuse under TTFT constraints and proposes a novel scheduling algorithm, $k$-LPM, which generalizes existing methods by balancing prefix reuse and fairness in query processing. Theoretical guarantees demonstrate that $k$-LPM achieves improved TTFT performance under realistic traffic patterns captured by a data generative model. Empirical evaluations in a realistic serving setting validates our findings, showing significant reductions in P99 TTFT compared to baseline methods.

\end{abstract}

\section{Introduction}

The rapid integration of large language models (LLMs) into online systems has spurred significant research aimed at improving their inference efficiency. Unlike traditional batch-processing environments, online usage demands a nuanced understanding of performance, prioritizing what is often termed ``goodput'' \cite{zhong2024distserve}—the maximum number of requests that can be served while meeting stringent constraints on time-to-first-token (TTFT) and time-per-output-token (TPOT). This shift in focus underscores the necessity of developing advanced inference and serving algorithms that optimize goodput, ensuring efficient and cost-effective deployment of LLMs in latency-sensitive applications.

An important approach to improving LLM inference efficiency is prefix reuse. In autoregressive LLMs, prompts with shared prefixes can leverage the intermediate representations (i.e. key and value tensors, often stored in the key-value (KV) cache) of the common prefix, reducing both memory usage and computational overhead during inference. A notable implementation of this idea is RadixAttention, which stores the KV cache of processed queries and automatically reuses shared prefixes by maintaining a radix tree \cite{zheng2024sglang}. This tree tracks cached values and matches them with new incoming prompts, enabling efficient reuse without requiring manual intervention. To manage memory constraints imposed by the fixed size of the KV cache, the system employs a least recently used (LRU) eviction policy for cached values. This method is particularly impactful in scenarios where many prompts share prefixes, offering substantial savings in compute and memory. Moreover, its ``out-of-the-box'' functionality simplifies adoption by eliminating the need for users to manually analyze or encode prefix structures, making it a practical solution for real-world applications.

Automatic prefix reuse enabled by RadixAttention can substantially improve goodput in many realistic settings, particularly when there is large overlap among query prompts. However, under a stream of queries, the performance of RadixAttention becomes dependent on the order in which queries are processed (i.e., the ``schedule''), and its implications for latency remain poorly understood.

The main scheduling algorithms considered in prior work are \emph{First Come First Serve (FCFS)}, which processes queries in arrival order, and \emph{Longest Prefix Match (LPM)}\footnote{LPM is the default query scheduling algorithm in SGLang v0.4.1. Meanwhile, FCFS is an option in SGLang and the default scheduling algorithm in vLLM v0.6.6.}, which greedily maximizes KV cache reuse at each scheduling step. In the offline setting (where all queries are known in advance), Theorem 3.1 in \cite{zheng2024sglang} shows that LPM indeed maximizes cache reuse. However, the online setting, where queries arrive over time under tight time-to-first-token constraints, was not characterized.

In this work, we provide a theoretical exploration of RadixAttention scheduling in the online regime, focusing on non-preemptive  scheduling. Our analysis and experiments show that existing scheduling approaches can lead to large TTFT spikes under heavy traffic, motivating a new scheduling algorithm that more robustly balances prefix reuse and waiting time. Specifically, we design a mechanism that exploits the benefits of LPM but mitigates its performance risks in high-load scenarios, resulting in superior performance for \emph{long-prompt, short-output} (i.e., prefill-dominant) queries where RadixAttention delivers significant efficiency gains. Examples of such applications include document summarization, coding assistants, and prompts with detailed instruction sets among others \cite{anthropic_prompt_caching}. Recently, a well-studied generation use case that can benefit from faster prefill is the scaling of test-time compute (TTC) via \textit{best-of-N} sampling, where $N$ outputs are generated in parallel and a verifier model is used to score them~\cite{snell2024scaling, cobbe2021training}. TTC has enabled recent LLMs to dramatically improve performance on reasoning benchmarks for domains like coding and math~\cite{guo2025deepseek}.

\subsection{Contributions}

\begin{noindlist}
    \item In Section \ref{sxn:problem_setting}, we introduce a formal model for analyzing the LLM query scheduling problem. Our model draws inspiration from prior work on the roofline model \cite{imai2024predicting} and incorporates experimental observations to ensure practical relevance. Despite its grounding in real-world considerations, the model remains sufficiently simple to facilitate analytical insights, including a formal specification of the ``query stream'' (Definition \ref{def:query_stream}) and a computational model for LLMs (Definition \ref{def:llm_computation}), enabling further exploration of the problem.
    \item We show that the decision problem of determining whether a TTFT constraint can be satisfied for a given query stream is NP-Hard when using Radix Attention (Theorem \ref{thm:decision_nphard}). This is in contrast to the case without Radix Attention, where latency is trivially minimized by FCFS, or the case with Radix Attention and uniform arrival times, where latency is minimized by LPM \cite{zheng2024sglang}.
    \item Although the decision problem is NP-hard, we introduce a data generative model (see Definition \ref{def:shuffled_queue}) that effectively captures the behavior of realistic query streams in key applications. Additionally, we present a generalized algorithm, $k$-LPM (Algorithm \ref{alg:klpm}), which outperforms both FCFS and LPM under this data generative model (see Theorem \ref{thm:klpm_separation} and Corollary \ref{corollary:klpm_better}).
    \item We validate our theoretical results with experiments demonstrating that our predictions hold in practice by running a Llama-3.1-8B-Instruct model on the SGLang serving framework using real prompts.
    In particular, the $k$-LPM algorithm is able to attain reduced P99 TTFT across a range of request rates and settings of the hyperparameter $k$. 
    \item Finally, in Appendix \ref{sxn:extra_theory} we prove that an approximation algorithm exists that, for a length $n$ query stream $\Qcal$ and TTFT constraint $T$, either 1) certifies no schedule exists satisfying the TTFT constraint, or, 2) returns a schedule such that the $(1-p)$-th percentile TTFT is at most $T$ for $p \in (0,1)$. Furthermore, this algorithm runs in $\Ocal(n \cdot \exp(1/p \log 1/p))$ time.
\end{noindlist}

\subsection{Related work}

There has been significant interest in prior work on scheduling LLM queries in order to maximize throughput while satisfying TTFT constraints. Some examples are FastSwitch, which employs a priority-based scheduler with preemption to dynamically allocate resources to effectively reduce TTFT and GPU idleness \cite{shen2024fastswitch}, and Orca, which employs iteration-level scheduling, processing each model iteration separately allowing for flexible batching and immediate response to newly arrived requests \cite{yu2022orca}. Many other papers increase throughput subject to latency constraints by innovations to the processing schedule of incoming prompts \cite{zhong2024distserve, patel2024splitwise, jain2024intelligent, agrawal2024taming}. Collectively, this prior work underscores the importance of dynamic scheduling in achieving high throughput without compromising latency guarantees.

The most relevant work to ours addresses scheduling of LLM queries with consideration of prefix reuse. \citet{zheng2024sglang} introduced RadixAttention and the LPM scheduler, which we build upon, but their work provides only limited exploration of scheduling strategies. \citet{srivatsa2024preble} proposes a priority-based local scheduler in Section 3.3 aimed at balancing prefix reuse and waiting times. The scheduler works by assigning requests to priority groups based on prefix cache hit rate and then selects a number of prompts in each group proportionally to its priority, but no accompanying analysis is provided. \citet{qin2024mooncake} integrates RadixAttention in local instances but primarily emphasizes maintaining a disaggregated KV cache to balance the decode and prefill phases, enhancing throughput while adhering to latency constraints in highly overloaded scenarios.

Analytic exploration of LLM inference efficiency is relatively limited, with some notable examples. \citet{kim2024effect} propose INFERMAX, an analytical framework for evaluating schedulers and deriving theoretical performance bounds, while identifying prefix-sharing awareness as a future direction. \citet{yang2024queueing} analyze the behavior of LLM queries using an M/G/1 queue while accounting for unknown decoding length, and \citet{guldogan2024multi} examine multi-bin batching to boost throughput. These studies underscore the value of theoretical approaches, which our work advances through a focus on prefix-aware scheduling.

\subsection{Notation} Let bold lower case letters denote strings, e.g., $\xb$ or $\yb$, over some fixed alphabet. Let $|\xb|$ denote the length of string $\xb$. Let $\overlap(\xb, \yb)$ denote the length of the maximal prefix overlap between $\xb$ and $\yb$.

\section{Computational model of RadixAttention}\label{sxn:problem_setting}

In this section, we provide a model of LLM computation that allows for theoretical study of RadixAttention under different scheduling algorithms. Consider a single LLM instance (that may be on a single GPU or parallelized across multiple GPUs in various ways). This LLM instance can process queries using continuous batching. The time for a single pass can be understood from the ``roofline model'' (see Appendix A.2 of \cite{imai2024predicting} for details on applying the roofline model to LLM inference). 

First, we define the query stream as a collection of prompts with associated arrival times. 
\begin{definition}\label{def:query_stream}
    (Query stream) Let a query stream of length $n$ be denoted by $\Qcal = (\xb_i, t_i)_{i \in [n]}$, where $\xb_i$ is an arbitrary length string in some fixed alphabet\footnote{This alphabet will be the set of tokens in practice.} and $t_i \geq 0$ is the arrival time of the $i$-th query.
\end{definition}

Note that this definition specifies a fixed finite collection of queries rather than a distribution as is common in queuing theory. We make this choice to simplify the analysis under the added complexity of prefix reuse, and to enable focus on the ``burst traffic'' behavior of TTFT rather than stable state behavior of the queue. That is, we are most interested in the behavior of scheduling algorithms in the periods of time where queries arrive faster than they can be processed and so the queue becomes temporarily long.

We present a formal model for the computation time of an LLM operating with a batch size of one under a specified queue ordering. This model captures both the time to process each prompt---accounting for prefix reuse from the preceding query---and the constraint that processing cannot start before a prompt's arrival time. We focus on scenarios where the prefill stage constitutes the primary contributor to total inference time, and where RadixAttention provides the largest improvement within that stage.

\begin{definition}[LLM Instance Computation]\label{def:llm_computation}
Consider a stream of queries $\mathcal{Q} = \{(\mathbf{x}_i, t_i)\}_{i \in [n]}$. Suppose these queries are processed in the order $\mathbf{x}_{j_1}, \mathbf{x}_{j_2}, \ldots, \mathbf{x}_{j_n}$. Let $R(j_k)$ denote the completion time of the $j_k$-th query. Then $R(j_1) = \lvert \mathbf{x}_{j_1}\rvert + \cattn\lvert \mathbf{x}_{j_1}\rvert^2$, and,
\begin{align*}
  R&(j_k) = \max\{ R(j_{k-1}),\, t_{j_k} \} 
  +(1+\cattn \cdot\lvert \mathbf{x}_{j_k}\rvert)  \bigl(\lvert \mathbf{x}_{j_k}\rvert
  \;-\; \overlap(\mathbf{x}_{j_k}, \mathbf{x}_{j_{k-1}})\bigr).
\end{align*}
In other words, the $j_k$-th query cannot start processing until its arrival time $t_{j_k}$. Its processing cost is proportional to the prompt length minus the prefix it shares with the previous query. We assume the cache is empty initially, so the first query costs $\lvert \mathbf{x}_{j_1}\rvert$.
\end{definition}

During inference, the prefill stage involves a forward pass over all input tokens. The above formula for $\Rcal(j_k)$ captures the dominant computational cost of a forward pass through an autoregressive transformer architecture. Layers such as the query, key, and value projections or MLP modules scale linearly with the number of tokens that must be processed in a query. Meanwhile in the attention mechanism, each token must attend to all previous tokens. Taking into account the cached prefix, this step scales with $|\xb_{j_k}|(|\xb_{j_k}| - \overlap(\xb_{j_k}, \xb_{j_{k-1}}))$. For short to medium sequence lengths, the point-wise feed-forward network (FFN) typically dominates computation time. For long sequence lengths, self-attention dominates, as it scales quadratically. We capture the relative cost of the MLP versus self-attention operations for a fixed architecture by the $\cattn$ constant. By focusing on a prefill-dominated regime, we simplify the analysis while retaining real-world relevance.

\subsection{Batch size and prefix cache}

The original specification of RadixAttention does not leverage reuse of prefixes of prompts in a single batch (see Algorithm 1 in \cite{zheng2024sglang}).
However, more recent development in SGLang has enabled within batch prefix sharing. In this case, behavior of scheduling algorithms in the batched setting can be closely approximated by the batch size one setting in Definition \ref{def:llm_computation}. Computing a query processing order and then dividing the queries into $B$ sized bins sequentially provides a schedule for the $B$ batch size setting that attains the same amount of prefix reuse. The only difference is that the time that the $i$-th query is finished being processed will be the same as when the $\lceil i /B \rceil$-th query is finished being processed. This difference will be negligible when the query arrival rate is high relative to the batch size.

A discrepancy between Definition~\ref{def:llm_computation} and the behavior of RadixAttention is that our model assumes only the last processed prompt is cached, whereas RadixAttention allocates a fixed amount of memory for caching. We choose to make this simplification to avoid introducing an additional hyperparameter and because the behavior remains similar under standard settings. Specifically, we expect the behavior to be similar when prompt length is fairly uniform and the query arrival rate is high relative to the TTFT constraint.

In our experiments (Section \ref{sxn:experiments}), we use dynamic batch size and a fixed memory pool for the prefix cache in the SGLang framework and observe that our theoretical predictions based on Definition \ref{def:llm_computation} are still accurate.

\section{Complexity of scheduling under TTFT constraints}

We first explore the complexity of the decision problem for determining whether there exists a schedule for a fixed queue that satisfies a given TTFT constraint $T$ for each query. By reduction to the \texttt{3-PARTITION} problem, we show that it is NP-Hard to decide feasibility of a queue under a TTFT constraint, despite this problem being trivial in the case where there is zero prefix reuse (where FCFS is optimal) or in the offline setting where all arrival times are uniform (where LPM is optimal).

\subsection{NP-hardness of feasibility determination}

\begin{figure*}
    \centering
    \includegraphics[width=0.8\linewidth]{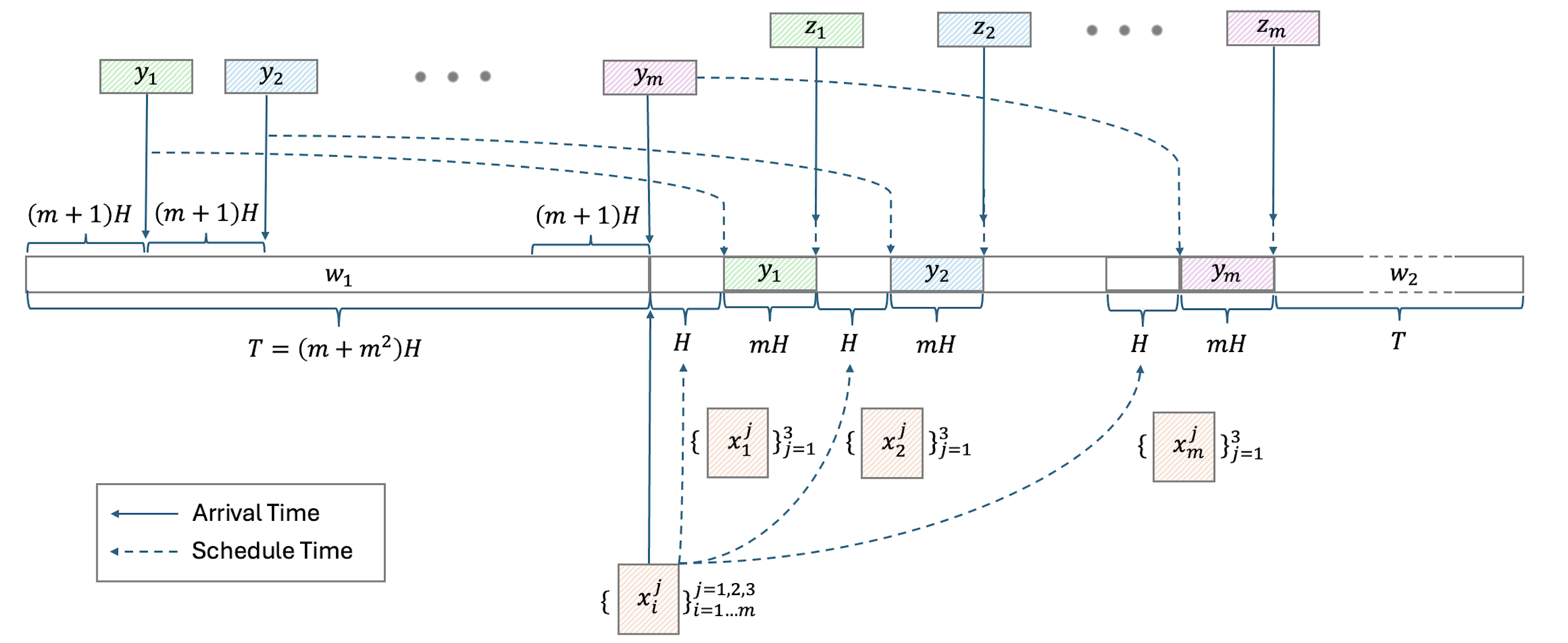}
    \caption{This figure graphically represents the imposed structure for any feasible schedule in the query stream construction of Theorem \ref{thm:decision_nphard}. Note that the only flexibility in the schedule is how the set of strings $\{\xb_i\}_{i\in[3m]}$ fits into the $m$ time windows of size $H$. The solid lines represent arrival times and the dashed lines represent processing start times.}
    \label{fig:qq_figure}
\end{figure*}

We provide the definition of the $\tpart$ problem, which is NP-Hard in the strong sense. See SP15 in the appendix of \cite{garey1979computers} for reference.
\begin{definition}\label{def:tpart}
    ($\tpart$) Let $m \in \mathbb{N}$ and $H > 0$. Given a set $\Acal$ of $3m$ integers such that $H/4 < a < H/2$ for all $a \in \Acal$ and $\sum_{a \in \Acal} a = mH$, decide if $\Acal$ can be partitioned into $m$ disjoint sets $\Acal_1,...,\Acal_m$ such that $\sum_{a \in \Acal_i} a = H$ for all $i \in [m]$.
\end{definition}

The intuition behind the proof of Theorem \ref{thm:decision_nphard} is as follows. The introduction of prefix reuse along with non-uniform arrival times allows us to construct query pairs that must be processed at a particular time. As an example, let query $\xb$ arrive at time $H$ and query $\yb$ arrive at $T + H$ where $\xb$ and $\yb$ have the same prompt. In order to achieve prefix reuse between these two prompts without any idle time while satisfying the TTFT constraint $T$, $\xb$ must finish being processed at $T + H$ and $\yb$ must start being processed at this time. By using this idea to introduce constraints on processing times, we may construct $m$ ``windows'' of size $H$ that a set of prompts may be feasibly scheduled within. By then constructing a set of queries with processing time equal to the integers in Definition \ref{def:tpart}, deciding the existence of a feasible schedule solves the $\tpart$ problem.

\begin{theorem}\label{thm:decision_nphard}
    Deciding if there is a processing order in query stream $\Qcal$ (Definition \ref{def:query_stream}) such that a TTFT constraint $T$ is satisfied under the computational model of Definition \ref{def:llm_computation} is an NP-Hard problem.
\end{theorem}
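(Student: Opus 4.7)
The plan is a polynomial-time reduction from $\tpart$ (Definition~\ref{def:tpart}), which is strongly NP-hard, to the feasibility decision problem. Given a 3-PARTITION instance with integers $\{a_1,\ldots,a_{3m}\}\subseteq(H/4,H/2)$ summing to $mH$, I will produce a query stream $\Qcal$ and a TTFT bound $T$ such that $\Qcal$ admits a feasible schedule under Definition~\ref{def:llm_computation} if and only if the integers admit a partition into $m$ triples each summing to $H$.

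The skeleton of the construction is a family of \emph{anchor} pairs. For each window index $j\in[m]$ I introduce $(\xb_j,\yb_j)$ sharing the same unique prompt, with arrival times $t^{\xb}_j$ and $t^{\yb}_j = t^{\xb}_j + T$, where the $t^{\xb}_j$ are staggered so that in a zero-idle schedule consecutive anchor pairs are separated by exactly $H$ units of free processing time. Applying the argument sketched just before the theorem inductively, the TTFT bound forces $\xb_j$ to occupy exactly the interval $[t^{\xb}_j, t^{\xb}_j + T]$, since any delay overruns $\xb_j$'s deadline; moreover, $\yb_j$ must be scheduled immediately after $\xb_j$, since any intervening query destroys the prefix match and inflates the cost of $\yb_j$ from $0$ past what its own deadline $t^{\yb}_j + T$ permits. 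Thus every feasible schedule is pinned to the rigid skeleton depicted in Figure~\ref{fig:qq_figure}, leaving $m$ ``open windows'' of duration $H$ between the anchor pairs as the only scheduling freedom.

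For each $i\in[3m]$ I will add a \emph{partition query} $q_i$ with a unique prompt and a carefully chosen arrival time, so that (a)~its processing cost under Definition~\ref{def:llm_computation} equals $a_i$, and (b)~its TTFT deadline does not preclude placement in any of the $m$ windows. Correctness then follows by a counting argument: the total partition-query cost $\sum_i a_i = mH$ equals the total window duration, so any feasible schedule must tile the windows with partition queries and zero idle time. Because each $a_i\in(H/4,H/2)$, each window must receive exactly three partition queries summing to $H$, i.e., a valid 3-partition of $\Acal$. The converse direction is immediate: given a 3-partition, place the $j$-th triple inside the $j$-th window in any order.

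I expect the main obstacle to be the joint calibration of the anchor length, $T$, and the partition arrival times. The anchor argument demands that $T$ coincide exactly with the prefill cost of an anchor prompt so that $\xb_j$ has no slack; meanwhile, a partition query that should be placeable in \emph{any} of the $m$ windows needs its deadline to span an $\Omega(m)$-window horizon past its arrival. With a purely linear cost ($\cattn=0$) and all partition queries released at time $0$, these two requirements are directly contradictory once $m\geq 2$. I will resolve this tension by exploiting the quadratic-in-length factor $(1+\cattn|\xb|)$ from Definition~\ref{def:llm_computation}—pushing the anchor prompt length into the regime where a single anchor's cost absorbs a $T$ much larger than the prompt length itself, while the shorter partition prompts stay in the near-linear regime—and by staggering partition release times so that each partition query's $[r_i,r_i+T]$ deadline window still intersects every anchor-free window. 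Verifying that all constructed parameters (arrival times, prompt lengths, and $T$) remain polynomial in $m$ and $H$, as required to invoke the \emph{strong} NP-hardness of $\tpart$, is the delicate bookkeeping that completes the reduction.
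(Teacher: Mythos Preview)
Your high-level plan---reduce from $\tpart$ by using identical-prompt anchor pairs to carve out $m$ windows of width $H$, then encode the $a_i$'s as partition queries that must tile those windows---is exactly the paper's approach, and you correctly spot the central tension: if each anchor is pinned by forcing its prefill cost to equal $T$, then a partition query's $T$-wide deadline window cannot reach all $m$ windows.

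However, your proposed fix does not close the gap. With $m$ anchor pairs each of processing cost $T$ (the second copy in each pair costing $0$ by overlap), the anchor-free windows are spread over a horizon of length at least $mT$. A partition query's feasibility interval $[r_i,\,r_i+T]$ has width $T$, so for $m\ge 2$ no choice of $r_i$ can make it intersect more than one of the $m$ windows. Invoking $\cattn>0$ lets you make $T$ large relative to the anchor \emph{prompt length}, but it does nothing to shrink the anchor \emph{processing cost} relative to $T$---that cost is $T$ by your own construction---so the horizon remains $\Theta(mT)$ and the obstruction persists. Staggering the $r_i$'s cannot help for the same arithmetic reason, and weakening the ``every window'' requirement destroys the $\tpart$ encoding, since each $a_i$ must be free to land in any triple.

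The paper resolves the tension by decoupling anchor cost from $T$. Its inner anchors $\yb_i,\zb_i$ have cost only $mH\ll T$, and their positions are pinned not by per-anchor TTFT tightness but by a \emph{global} zero-slack argument: two outer ``framing'' queries $\wb_1,\wb_2$ of cost exactly $T=(m+m^2)H$ force all remaining work into $[T,2T)$, and a total-work count shows that every $(\yb_i,\zb_i)$ overlap must be fully realized with no idle time. Combined with $\zb_i$'s arrival time and $\yb_i$'s deadline, this fixes $\yb_i$'s completion time exactly. Because $T$ now equals the full span of all $m$ inner anchors plus all $m$ windows, the partition queries (all released at time $T$, deadline $2T$) can be placed in any window. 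This global-budget pinning, rather than per-anchor cost-equals-$T$ pinning, is the missing idea in your plan.
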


\begin{proof}
Let $\Acal$ be an instance of the $\tpart$ problem defined in Definition \ref{def:tpart}. We construct a query stream $\Qcal$ such that determining whether the queries can be scheduled to meet the TTFT constraint $T = (m + m^2)H$ under the computation model of Definition~\ref{def:llm_computation} is equivalent to deciding the $\tpart$ instance.

\emph{Query construction.} 

Define:
\begin{itemize}[leftmargin=*,topsep=0pt,parsep=0pt,itemsep=0.2ex]
    \item $\Xcal$: A set of $3m$ queries $\{\xb_i\}$, where each prompt consists of a unique character, leading to zero prefix overlap between queries. Furthermore, $|\xb_i| = a_i$ for some indexing that matches $\Acal$. Assign all $\xb_i$ the same arrival time $t = (m + m^2)H$.
    \item $\Ycal$: An ordered set of $m$ queries $\{\yb_i\}$, each of length $mH$, each composed of a unique character not appearing in $\Xcal$. Assign arrival time $t_i = i(H + mH)$ to $\yb_i$.
    \item $\Zcal$: A set of $m$ queries $\{\zb_i\}$ where each $\zb_i$ is identical to $\yb_i$ in content but arrives at time $t_i = T + i(H + mH)$.
    \item $\wb_1$ and $\wb_2$: Two additional queries, each of length $T$, composed of characters distinct from those in $\Xcal \cup \Ycal$. Let $\wb_1$ arrive at $t = 0$ and $\wb_2$ arrive at $t = 2T$. 
\end{itemize}

Set the overall query stream to be $\Qcal \;=\; \Xcal \,\cup\, \Ycal \,\cup\, \Zcal \,\cup\, \{\wb_1,\wb_2\}$. Observe:
\begin{enumerate}[leftmargin=*,topsep=0pt,parsep=0pt,itemsep=0.2ex]
    \item $\wb_1$ must begin processing at time $t=0$. Since $|\wb_1|=T$ and there is no prior cache, it finishes exactly at $t=T$.
    \item Since $\wb_2$ composed of characters distinct from those in $\Xcal \cup \Ycal$, $\wb_2$ has zero cache overlap when it arrives at $t=2T$ \footnote{It is trivial to show that some queries must be processed between $\wb_1$ and $\wb_2$ in order to meet the TTFT constraint.}. Then, $\wb_2$ must begin processing exactly at $t=2T$ to finish by $3T$, thereby forcing \emph{all} queries in $\Xcal \cup \Ycal \cup \Zcal$ to be completed within the time window $[T, 2T)$.
\end{enumerate}

Within $[T, 2T)$, the only possible nontrivial cache overlaps come from pairs and $(\yb_i, \zb_i)$ because all other queries have prompts with distinct characters. For each $i\in [m]$, the maximal prefix overlap between $\yb_i$ and $\zb_i$ reduces the processing time by $|\yb_i|$. However, to fit all queries from $\Xcal \cup \Ycal \cup \Zcal$ into $[T,2T)$, \emph{every} possible overlap must be fully utilized. This rigid constraint implies:
\begin{itemize}[leftmargin=*,topsep=0pt,parsep=0pt,itemsep=0.2ex]
    \item The total time to process $\Ycal$ and $\Zcal$ (accounting for maximum $\yb_i$--$\zb_i$ overlap) is: $ \sum_{\yb_i \in \Ycal} |\yb_i|
        \;=\; m^2H \;=\; T - mH.$
    \item Processing must then be continuous (no idle time) in $[T,2T)$, and must attain maximum overlap between every $\yb_i$ and $\zb_i$. In turn, this implies that $\yb_i$ must finish being processed at exactly $T + i(H + mH)$ so that $\zb_i$ can be processed immediately as it arrives without violating the TTFT constraint on $\yb_i$. 
\end{itemize}

\emph{Partition of $\Xcal$ and relation to $\tpart$.}

We next show that \(\Xcal\) must be partitioned into \(m\) disjoint batches, each of total length \(H\).  
Observe that for each \(i\in[m]\), since \(\zb_i\) must start right when \(\yb_i\) finishes, we have 
\[
\yb_i \text{ end (and thus) } \zb_i \text{ begin at } T + i(H + mH).
\]
Then \(\yb_{i+1}\) must be processed starting from $T + i(H + mH) + H$
in order to finish by \(T + (i+1)(H + mH)\) and maintain the “no idle time” schedule. Since $\zb_i$ is identical to $\yb_i$, its processing time is zero in this case. Therefore, between \(\zb_i\) finishing at \(\,T + i(H + mH)\) and \(\yb_{i+1}\) starting at \(\,T + i(H + mH) + H,\) there is exactly a length-\(H\) sub-interval available.   
Since \(\sum_{\xb_i\in\Xcal} |\xb_i| = mH,\) the only way to fill these \(m\) sub-intervals continuously is to divide \(\{\xb_i\}\) into \(m\) disjoint groups, each summing to exactly \(H\).  But deciding such a partition is precisely the \tpart{} problem.  Consequently, scheduling \(\Qcal\) to meet the TTFT constraint is possible if and only if the instance \(\Acal\) of \tpart{} admits a feasible partition. \end{proof}

In Appendix \ref{sxn:extra_theory}, we further explore the general problem of scheduling queries to satisfy a TTFT constraint. In Theorem \ref{thm:percentile_approx}, we prove that there exists an algorithm that accepts a query stream and TTFT constraint $T$ and either certifies that there is no schedule satisfying constraint $T$ or returns a schedule such that the $(1-p)$-th percentile TTFT is at most $T$ in $\Ocal(n \cdot \exp(1/p \log 1/p))$ time.

\section{$k$-LPM scheduling algorithm}\label{sxn:klpm}

In this section, we introduce our proposed scheduling algorithm, \(k\)-LPM, that generalizes the FCFS and LPM scheduling algorithms. Note that it reduces to LPM when \(k = \infty\) and reduces to FCFS when \(k=1\).

Below, we show that \(k\)-LPM can achieve superior performance in terms of TTFT on random queues under a data generative model that captures the relevant properties of realistic use cases for RadixAttention (Theorem \ref{thm:klpm_separation}). This shows that, despite the intractability of the general scheduling problem, it is still possible to obtain theoretically grounded improvement over existing methods in realistic settings. In Section \ref{sxn:experiments}, we further support this improvement with experiments showing that the \(k\)-LPM algorithm achieves better TTFT performance than FCFS or LPM on queues constructed using real prompt distributions.

\begin{algorithm}
\caption{$k$-LPM}
\label{alg:klpm}
\begin{algorithmic}[1]
\STATE \textbf{Input:} Input queue of prompts and arrival times $\Qcal=(\xb_i, t_i)$
\WHILE{true}
    \STATE Process the oldest query, i.e., $\xb_i$ such that $i = \argmin_j t_j$
    \FOR{$i = 1, \dots, k-1$}
        \STATE Process $\xb$ that maximizes the prefix cache hit rate
    \ENDFOR
\ENDWHILE
\end{algorithmic}
\end{algorithm}

The intuition behind Algorithm \ref{alg:klpm} is that it first performs $k$ greedy prefix-match steps in the spirit of LPM to maximize prefix reuse. 
After these $k$ steps, it processes the oldest query in the queue, mirroring FCFS. 
This strategy circumvents the LPM failure case, where a query could be unprocessed if its prompt never have sufficiently high prefix overlap. 
At the same time, it retains the significant prefix-reuse advantage that LPM provides.

\subsection{Data generative model}

Although deciding whether a TTFT constraint can be satisfied for a given query stream is NP-Hard (see Theorem \ref{thm:decision_nphard}), we are able to show that under a data generative model capturing properties of practical use cases, $k$-LPM achieves an improvement on the maximum TTFT. Our data generative model has the following additional structure.

\textbf{Tree structured queries:} Recall that the maximum prefix reuse is attained for a fixed set of prompts by DFS traversal of the radix tree constructed from all prompts. Hence, RadixAttention can only provide significant efficiency gains if the sum of edges of the radix tree constructed from prompts in a query steam is significantly less than the sum of prompt lengths. Fortunately, many applications of LLMs fulfill this assumption.

In this section, we restrict our attention to instances where queries in a queue approximately follow a tree structure of low height. An example of such prompt structure used in \cite{360brew} is $\xb = \texttt{(base\_prompt)(user\_context)(doc)}$.
Here, all queries share the same $\texttt{(base\_prompt)}$, and multiple queries may share the same $\texttt{(user\_context)}$.

Examples of such structures include generative usecases: personalized content generation~\cite{zhang2024personalization}, conversational context-aware question answering~\cite{zaib2022conversational}, and the predictive usecase with LLMs as engagement predictors in recommendation systems~\cite{wu2024survey}. These scenarios exemplify applications where the prompt structure remains consistent while the user context varies, allowing for efficient processing and relevant responses. By focusing on such structured instances, we can better analyze and optimize the computational models for LLMs under constrained scheduling conditions.

Our data generative model considers the simplest case of prompts constructed from a height two prefix tree, where the edges at each depth are constant. The arrivals of the queries are regular, but the order of the arrivals is sampled uniformly from all permutations of the queue. This is the simplest model that captures the interplay between the tree structure of the prompt prefixes and the arrival rate of the queries under randomness in the queue arrival order. For ease of exposition, we keep with the $\texttt{(user)}$ and $\texttt{(doc)}$ terminology of the previous example use case. However, these ideas apply generally to query streams with approximately tree-structured prompts.

\begin{definition}[Regular Arrival Shuffled Queue]\label{def:shuffled_queue}
Let $n, u, k, d, s \in \mathbb{N}$ be parameters such that $k$ divides $n$. We form a collection of $n$ queries, each denoted by $\texttt{(user)(doc)}$, where:
\begin{itemize}[leftmargin=*,topsep=0pt,parsep=0pt,itemsep=0.2ex]
  \item \texttt{(user)} is a substring of length $u$, repeated in exactly $k$ distinct queries.
  \item \texttt{(doc)} is a substring of length $d$, unique to each query.
  \item Each \texttt{(user)} and \texttt{(doc)} starts with a distinct character from a large enough alphabet, ensuring zero overlap among different \texttt{(user)} or \texttt{(doc)} substrings.
\end{itemize}
Thus, there are $n/k$ distinct user substrings, each used $k$ times, and $n$ distinct doc substrings, one per query, yielding $n$ total prompts. We construct the \emph{regular arrival shuffled queue} $\Qcal_n = \{(\xb_{\sigma(i)},\, s \cdot \sigma(i)) \mid i \in [n]\}$ by sampling a permutation $\sigma$ uniformly randomly from the symmetric group $S_n$, and assigning arrival time $s \cdot \sigma(i)$ to the $i$-th prompt.
\end{definition}

The structural assumptions in Definition \ref{def:shuffled_queue} can certainly be relaxed. In real settings, there would likely be negligible but non-zero overlap between unique user and documents, and the repitions of each user may not be uniform. Here, we avoid these details to focus on clarity regarding the most pertinent structure.

\subsection{TTFT improvement from $k$-LPM }

In scenarios where each query can be processed swiftly—specifically, when the processing time of a given query is less than the inter-arrival interval $s$—the FCFS scheduling algorithm is optimal. However, in more practically relevant burst-traffic regimes where queries arrive in rapid succession and cannot be processed fast enough, a backlog of unprocessed queries inevitably forms. To illustrate this, consider a toy example with $\cattn=0$, $n = 4$ queries, a replication factor $k = 2$, and parameters $u = 5$ and $d = 5$. The queries are denoted as $\xb_1 = \texttt{(user)}_1\texttt{(doc)}_1$, $\xb_2 = \texttt{(user)}_2\texttt{(doc)}_2$, $\xb_3 = \texttt{(user)}_1\texttt{(doc)}_3$, and $\xb_4 = \texttt{(user)}_2\texttt{(doc)}_4$. In the case where the inter-arrival time $s = 10$, FCFS scheduling is clearly optimal, resulting in a uniform TTFT of 10 units for each query. Conversely, in the case with $s = 0$, representing a burst-traffic scenario, the LPM scheduling strategy becomes optimal. Under FCFS, the processing order is $\xb_1$, $\xb_2$, $\xb_3$, $\xb_4$ and $\ttft_i = 10 \cdot i$ with $\max(\ttft_i) = \ttft_4 = 40$. Under LPM, the processing order is rearranged to $\xb_1$, $\xb_3$, $\xb_2$, $\xb_4$ with TTFTs of 10, 15, 25, and 30, and $\max(\ttft_i) = 30$. This example underscores the improvement of TTFT from cache reuse in practical scenarios.

In the following theorem, we show that $k$-LPM has a lower maximum TTFT than FCFS or LPM\footnote{We assume that ties in prefix overlap are broken by uniform sampling in the LPM algorithm.} with high probability on instances of the regular arrival shuffled queue (Definition \ref{def:shuffled_queue}). We set the hyperparameter $k$ in $k$-LPM to match the number of user repetitions defined in Definition~\ref{def:shuffled_queue} for simplicity. In practice, the hyperparameter $k$ in $k$-LPM can be determined through back-testing or by employing an adaptive multi-armed bandit approach to achieve better performance than the provable improvement observed in the simple setting. In our experiments (Section \ref{sxn:experiments}), we empirically measure the performance for varying values of the hyperparameter $k$.

Intuitively, the theorem shows that when $s$ is small and $u$ is large, LPM is much better than FCFS, and $k$-LPM retains this advantage. On the other hand, when $s$ is relatively large, FCFS is better and $k$-LPM retains a $\frac{1}{k}$ factor of the $sn$ reduction in TTFT. For intermediate values of $s$ and $u$, $k$-LPM is better than both algorithms as we prove in Corollary \ref{corollary:klpm_better}. Proofs for these results are provided in Appendix \ref{sxn:klpm_proofs}.

\begin{theorem}[LPM/FCFS vs.\ $k$-LPM]
\label{thm:klpm_separation}
Let $\Qcal_n$ be a regular arrival shuffled queue (Definition~\ref{def:shuffled_queue}) of length $n$ with $k$ repetitions of each user prefix, user history length $u$, document length $d$, and inter-arrival gap $s$.  Suppose the queue starts being processed at time $T \ge s\,n$, and let $\ttft_i$ denote the time-to-first-token of the $i$-th query under a specified scheduling algorithm. Then, under the computational model of Definition~\ref{def:llm_computation} with $\cattn = 0$:

\begin{itemize}
\item
\textbf{(LPM)} 
For every $\epsilon>0$ and $\delta \in (0,1)$, there is an $n_0$ such that for all $n \ge n_0$, with probability at least $1-\delta$ (with respect to random shuffle and randomness in LPM): $\max_{i\in [n]}\,\ttft_i
~\;\ge\;
T \;+\;
\bigl(1 - \epsilon\bigr)\;n\,\Bigl(\tfrac{u}{k} + d\Bigr).$

\item
\textbf{(FCFS)}
For every $\epsilon>0$ and $\delta \in (0,1)$, there is an $n_0$ such that for all $n \ge n_0$, with probability at least $1-\delta$ (over the random shuffle): \newline $\max_{i\in [n]}\,\ttft_i
~\;\ge\;
T \;+\;
\bigl(1 - \epsilon\bigr)\;n\,\bigl(u + d - s\bigr).$

\item
\textbf{($k$-LPM)} 
Deterministically (i.e.\ for any arrival order), Algorithm~\ref{alg:klpm} satisfies: $\max_{i\in [n]}\,\ttft_i 
~\;\le\;
T \;+\;
n\,\Bigl(\tfrac{u}{k} + d - \tfrac{s}{k}\Bigr).$

\end{itemize}
\end{theorem}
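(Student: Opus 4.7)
I would prove the three bullets separately; the $k$-LPM upper bound is deterministic and clean, the FCFS bound reduces to a routine concentration argument, and the LPM lower bound is the main obstacle because it requires coupling two independent random sources.

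For the deterministic $k$-LPM upper bound, the key structural observation is that, because each user group contains exactly $k$ queries all sharing the length-$u$ prefix, each batch of Algorithm~\ref{alg:klpm} processes one entire user group: the oldest unprocessed query opens the batch at cost $u+d$, and each of the $k-1$ greedy prefix matches that follow is another member of the same user at cost $d$. Each batch therefore costs $u+kd$. Let $\tau_{(b)}$ denote the minimum arrival rank in the $b$-th processed group; since the $n/k$ distinct values $\tau_{(1)}<\cdots<\tau_{(n/k)}$ all lie in $[n]$, we have $\tau_{(b)}\ge b$. Every query in the $b$-th batch thus arrives at time at least $sb$ and completes by $T+b(u+kd)$, giving $\ttft\le T+b(u+kd-s)$; the worst case $b=n/k$ yields the stated bound.

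For the FCFS lower bound, the processing order equals the arrival permutation $\sigma^{-1}$, so the total processing time can be written as $n(u+d)-uM$, where $M$ counts adjacent positions in $\sigma^{-1}$ whose queries share a user. By symmetry $\EE[M]=k-1$, and standard concentration for functions of a uniform random permutation (each transposition alters only $O(1)$ of the indicators defining $M$) yields $M=k-1+O(\sqrt{n\log(1/\delta)})$ with probability at least $1-\delta$. The last-processed query arrives at time $sn$, so its TTFT is at least $T+n(u+d-s)-O(\sqrt{n\log(1/\delta)})$, and taking $n$ large enough absorbs the sublinear error into $\epsilon n(u+d-s)$.

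The LPM lower bound is the main obstacle. Since all queries are queued at time $T$ and ties are broken uniformly, the first step I would establish is that LPM processes each user group contiguously and orders the $n/k$ groups according to a uniformly random permutation that is \emph{independent} of $\sigma$ (the first pick is uniform over queries, then the $k-1$ other members of its user are pulled in by the prefix rule, and the cache effectively resets each time a new user is entered). The difficulty is to exhibit a single query that is both late-processed (so completion is close to $T+n(u/k+d)$) and early-arriving (so arrival is much smaller than $sn$). My plan is to examine the $L$ user groups processed last: by the independence above, the $Lk$ arrival ranks in these groups form a uniformly random $Lk$-subset of $[n]$, and a hypergeometric/union-bound estimate gives that, with probability at least $1-(1-\epsilon')^{Lk}$, at least one of these queries has arrival rank $\le \epsilon' n$. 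For such a query the completion is at least $T+n(u/k+d)-L(u+kd)-(k-1)d$ and the arrival is at most $s\epsilon' n$, so its TTFT is at least $T+n(u/k+d)-L(u+kd)-(k-1)d-s\epsilon' n$. The delicate part is the joint choice $\epsilon'=\Theta(\epsilon(u/k+d)/s)$ and $L=\Theta(\log(1/\delta)/(k\epsilon'))$, which simultaneously pushes the failure probability below $\delta$ and keeps the additive error at most $\epsilon n(u/k+d)$ once $n$ exceeds a threshold depending on $\epsilon,\delta,s,u,d,k$; the rest is routine bookkeeping.
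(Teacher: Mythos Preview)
Your proposal is correct and follows essentially the same route as the paper: all three parts hinge on the observation that $k$-LPM (and LPM) process the queue in complete user groups of cost $u+kd$, and the two probabilistic lower bounds both exploit the independence between the processing order and the random arrival permutation to locate a query that is simultaneously early-arriving and late-completed. The only cosmetic differences are that for FCFS the paper uses a bare Markov bound (rather than McDiarmid) since only $M=o(n)$ is needed, and for LPM it streamlines your group-based argument by observing directly that the composition of the random processing order with the independent uniform arrival permutation is itself uniform on $S_n$, then taking $q=n^{3/4}$ early arrivals and $j=n-\sqrt n$ late processing positions.
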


\begin{corollary}\label{corollary:klpm_better}
    For any values of $s$, $k$, $u$, and $d$ such that $0 < s < u$ and $k > 1$, and for any $\delta \in (0,1)$, there exists $n_0 \in \mathbb{N}$ such that $k$-LPM achieves a lower maximum TTFT that LPM and FCFS simultaneously on the regular arrival shuffled queue (Definition \ref{def:shuffled_queue}) for any value of $n \geq n_0$ with probability at least $1 - \delta$. This result holds for any value of $\cattn \geq 0$ in the computational model of Definition \ref{def:llm_computation}.
\end{corollary}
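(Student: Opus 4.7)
The plan is to combine the three bounds from Theorem \ref{thm:klpm_separation} via a careful choice of $\epsilon$. First I would compare the deterministic $k$-LPM upper bound $T + n(u/k + d - s/k)$ with the LPM lower bound $T + (1-\epsilon)n(u/k + d)$: elementary algebra shows that $k$-LPM is strictly smaller whenever $\epsilon < s/(u + kd)$, which is positive since $s > 0$. For the FCFS lower bound $T + (1-\epsilon)n(u + d - s)$, I would compute the gap
\[
(u + d - s) \;-\; (u/k + d - s/k) \;=\; (u - s)\bigl(1 - 1/k\bigr),
\]
which is strictly positive under the hypotheses $u > s$ and $k > 1$, so $k$-LPM is strictly smaller for any $\epsilon < (u-s)(k-1)/(k(u+d-s))$.

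Next I would fix $\epsilon^\ast > 0$ smaller than both of these thresholds and invoke Theorem \ref{thm:klpm_separation} twice (once for LPM, once for FCFS), each with failure probability $\delta/2$, taking the maximum of the two resulting $n_0$'s. A union bound ensures that both lower bounds hold simultaneously with probability at least $1 - \delta$ for all $n \geq n_0$, while the $k$-LPM upper bound requires no randomness. Term-by-term comparison then yields the claimed strict improvement over both baselines.

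The remaining piece is the extension to general $\cattn \geq 0$. Since every prompt in Definition \ref{def:shuffled_queue} has length exactly $u + d$, each per-query processing cost in Definition \ref{def:llm_computation} scales by the same constant factor $c := 1 + \cattn(u + d)$, while arrival times are unaffected. Re-tracing the three arguments in the proof of Theorem \ref{thm:klpm_separation} with this scaling yields modified bounds $T + (1-\epsilon)nc(u/k + d)$ for LPM, $T + (1-\epsilon)n(c(u+d) - s)$ for FCFS, and $T + nc(u/k + d) - ns/k$ for $k$-LPM. The FCFS gap then becomes $(1 - 1/k)(cu - s)$, which remains positive because $c \geq 1$ and $u > s$, and the LPM gap becomes $s/(c(u+kd))$, still positive. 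So the same $\epsilon^\ast$-based argument carries through.

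The main obstacle I anticipate is not the final algebra but verifying that the $\cattn > 0$ scaling is genuinely clean: one must confirm that LPM's cumulative cost to finish $j$ queries becomes $c(\lceil j/k \rceil u + j d)$, that FCFS's telescoping sum satisfies $\ttft_n = T + cn(u+d) - sn - cu\sum_{i=2}^n I_i$ with the indicator-sum concentration argument unchanged, and that the two affine functions bounding the $k$-LPM upper bound still intersect at $i^\ast = n/k$ when only the processing-cost terms are multiplied by $c$. Once this bookkeeping is done, the comparisons and union bound finish the proof.
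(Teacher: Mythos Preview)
Your proposal is correct and follows essentially the same approach as the paper: compare the deterministic $k$-LPM upper bound against the two probabilistic lower bounds from Theorem~\ref{thm:klpm_separation}, and handle $\cattn>0$ by exploiting that every prompt has identical length $u+d$ so that processing costs scale by a common factor. The paper is terser---it omits the explicit $\epsilon$ thresholds and union bound you spell out---and for the $\cattn$ extension it simply observes that multiplying all processing costs by $1+c$ is equivalent to applying the $\cattn=0$ result with $s$ replaced by $s/(1+c)$, which still satisfies $0<s/(1+c)<u$; this spares you the bookkeeping of re-deriving all three bounds, though your retracing argument is equally valid.
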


\section{Experiments}\label{sxn:experiments}

In this section, we measure the performance of $k$-LPM versus FCFS and LPM in a realistic setting. Our results validate the predictive power of the computational model from Section \ref{sxn:problem_setting} and the data generative model from Definition \ref{def:shuffled_queue} in real-world serving scenarios. 

In our experiments, we use the Llama-3.1-8B-Instruct model \cite{dubey2024llama} with tensor parallelism across eight A100 GPUs. We run the experiment using the SGLang v0.4.1 serving framework. In particular, we evaluate the timing metrics using SGLang's serving benchmark utility \cite{sglang_bench_serving} and only modify the benchmarking dataset. We implement the $k$-LPM algorithm as an extension to the current LPM implementation in SGLang. Finally, we construct the dataset used for benchmarking by sampling four prompts with shared user history from the 8k context length prompts described in \cite{360brew} for $2100$ prompts in total. We follow the textual interface of \citet{360brew}—(Instruction, Member Profile, Past Interactions, Question)—where the first three constitute the shared prefix and only the Question varies; see 360Brew, Table 1 for an example of this exact layout. We then randomly shuffle the ordering of these prompts and use $100$ for warm up of the benchmarking server.

\subsection{Performance of $k$-LPM}

\noindent
\begin{minipage}[t]{0.48\textwidth}\vspace{0pt}
We measure P99 TTFT vs.\ request rate for varying values of $k$ in the $k$-LPM algorithm. Our key observation (Figure \ref{fig:request_rate_ttft_all_k}) is that setting the hyperparameter to $k=2$ achieves reduced P99 TTFT across a wide range of request rates compared to FCFS and LPM. Note that SGLang's benchmarking utility uses a Poisson arrival process, so ``request rate'' refers to the average number of requests per second.  Additional experiment results are provided in Appendix \ref{sxn:additional_experiments} showing other TTFT percentile metrics and performance when varying the amount of prefix reuse.

The experimental results not only highlight the benefits of the $k$-LPM scheduling algorithm but also demonstrate that our theoretical framework---as encapsulated by Theorem~\ref{thm:klpm_separation}---accurately predicts scheduling behavior in real-world settings. This holds true even though the experiments do not strictly adhere to all of the assumptions required by the theorem. In particular, we make the following observations:
\end{minipage}\hfill
\begin{minipage}[t]{0.48\textwidth}\vspace{0pt}
  \centering
  \includegraphics[width=\linewidth]{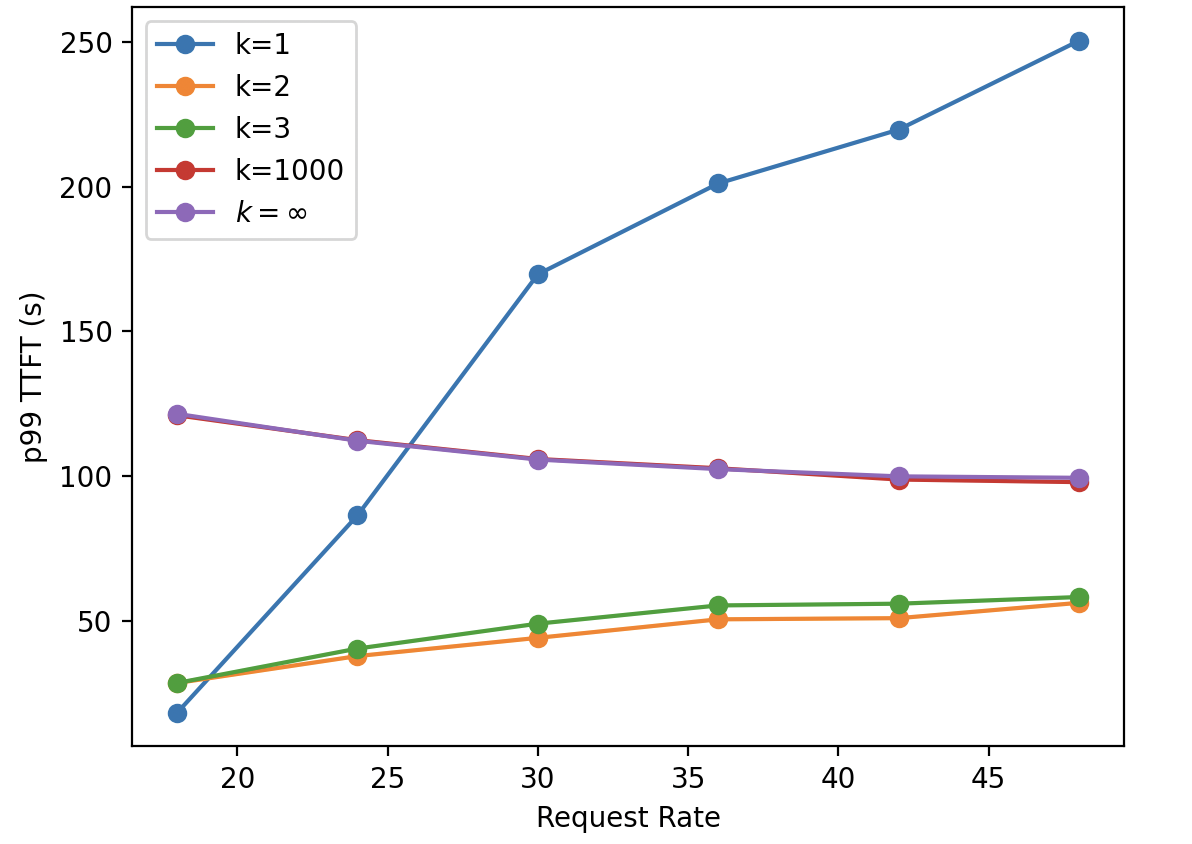}
  \captionof{figure}{We measure P99 TTFT versus request rate for five values of the hyperparameter $k$ on 2000 randomly shuffled prompts from the usecase described in \cite{360brew}. Note that $k=1$ corresponds to FCFS and $k=\infty$ corresponds to LPM.}
  \label{fig:request_rate_ttft_all_k}
\end{minipage}

\begin{itemize}
    \item \textbf{$k$-LPM consistently outperforms both FCFS and LPM.} Over a wide range of request rates considered, $k$-LPM achieves a lower TTFT. This underscores its robust advantage, especially under moderate to heavy loads. Additionally, we note that large values of $k$ (e.g., $k=1000$) behave like $k=\infty$. 
    
    \item \textbf{FCFS shows better performance at low request rates, while LPM is advantageous at higher rates.} This matches our theoretical insight in Theorem~\ref{thm:klpm_separation}, where a larger inter-arrival time $s$ favors FCFS, but as $s$ decreases (i.e., the request rate grows), LPM becomes more efficient than FCFS.

\item \textbf{$k$-LPM performance remains robust under realistic models and prompt data, even when theoretical assumptions are relaxed.} Despite using a real LLM, an off-the-shelf serving framework, and an industrial prompt dataset, the observed scheduling behavior closely matches our theory. Here, prefix overlap structure is only approximately tree-like and random tie-breaking holds in practice; streaming processing (starting on first arrival) shows the “$T\ge sn$” assumption is inconsequential; and even when $k$ differs from the true replica count (we used four replicas), choices like $k=2$ or $3$ still yield substantial P99 TTFT improvements.

\end{itemize}

\section{Future work}

The primary objectives of this work were to formalize the LLM query scheduling problem in the context of RadixAttention and to develop a practical scheduling algorithm for latency-sensitive applications. This theoretical framework not only informs the design of new methods, but is particularly valuable given the current lack of empirical benchmarks for comparing scheduling strategies. However, further advancements in this area will necessitate empirical evaluation using real-world arrival patterns. In particular, the described behavior depends on the complex interaction between constant factors dictating the arrival rate, query processing rate, and average prefix reuse, along with batch size, radix tree memory limit, and other factors in real serving frameworks. Greater understanding of LLM query scheduling in real systems necessitates a standardized, measurement-driven evaluation suite to characterize the practically relevant regime.

The proposed problem formulation and theoretical results leave many interesting extensions open. One open question is whether there is an algorithm which returns a schedule satisfying a constraint on the $(1-p)$-th percentile TTFT in $\poly(1/p)$ time, i.e., a polynomial time approximation scheme. Additional directions of interest would be extending the computational model of Section \ref{sxn:problem_setting} to handle distributional query streams or non-constant decoding length. Finally, generalizing the data generative model of Definition \ref{def:shuffled_queue} may be interesting to capture other properties of real data.

\bibliographystyle{plainnat}
\bibliography{bibliography}

\appendix

\newpage

\section{Percentile TTFT constraint approximation}\label{sxn:extra_theory}

The hardness result of Theorem \ref{thm:decision_nphard} motivates us to consider approximation guarantees for computing a schedule satisfying constraints on the TTFT. One important relaxation of the problem is to instead consider scheduling algorithms that satisfy a constraint on a fixed percentile of the per-query TTFTs. This problem is of interest since latency constrained applications typically seek to bound the P95 or P99 latency of a response in practice. In this section, we show that such a relaxation is tractable when the interaction distance of queries in a queue (controlled by the latency constraint $T$ and maximum prompt length $m$) is bounded.

Theorem \ref{thm:percentile_approx} proves there is an algorithm that runs in polynomial time with respect to the query stream length that either returns a schedule satisfying constraint $T$ on the $(1-p)$-th percentile TTFT or certifies that no feasible schedules exists such that all queries satisfy the TTFT constraint $T$. Note that there is a necessary gap where the the positive case pertain the the percentile TTFT problem and the negative case certifies no schedule satisfies a constraint on the maximum TTFT. This gap is fundamental, since otherwise we could solve the problem for the maximum TTFT in by adding enough unsatisfiable queries and solving the percentile TTFT decision problem.

\begin{theorem}\label{thm:percentile_approx}
    There exists an algorithm that takes a length $n$ query stream $\Qcal$ (see Definition \ref{def:query_stream}) and a TTFT constraint $T > 0$ as input and either:
    \begin{enumerate}
        \item Returns a certificate that no schedule exists for $\Qcal$ that satisfies TTFT constraint $T$, or
        \item Returns a schedule for $\Qcal$ such that the $(1-p)$-th percentile TTFT is at most $T$,
    \end{enumerate}
    under the computational model of Definition \ref{def:llm_computation} with $\cattn=0$. Furthermore, if every query is at most length $m$ and no query is an exact prefix of another, then this algorithm runs in $\Ocal(n \cdot \exp(1/p \log 1/p))$ time when $m$ and $T$ are bounded by a constant.
\end{theorem}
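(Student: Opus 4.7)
The plan is to present a dynamic programming algorithm that processes queries in order of arrival and maintains a bounded sliding window, enumerating local ordering decisions and drop choices. It either returns a percentile-feasible schedule or certifies max-infeasibility.

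First I would establish a structural bound on the search space. Under $\cattn = 0$, bounded $T$ and $m$, and no query being a strict prefix of another, Definition~\ref{def:llm_computation} forces every query to take at least one unit of processing time. Consequently, in any schedule meeting the max TTFT constraint, at most $\Ocal(T + m)$ queries can be simultaneously ``active'' (arrived but not yet processed), so the rank of a query in the processing order deviates from its arrival rank by at most a constant $C_{T,m} = \Ocal(T+m)$. I would then extend this to the percentile case by an exchange argument: any schedule with at most $pn$ violations can be transformed, without increasing the violation count, into one whose reordering distance over non-violated queries is at most $W = C_{T,m} + \Ocal(1/p)$. The intuition is that each unit of extra reordering beyond $C_{T,m}$ requires absorbing a violation somewhere in the window, and amortizing these absorptions across the full schedule yields an $\Ocal(1/p)$ ceiling.

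Given this window bound, the DP is defined on a sliding window of $W$ consecutive arrivals. The state at the current arrival position $i$ records: the last processed query (for computing $\overlap(\cdot,\cdot)$), the current processing time relative to $t_i$, and a Boolean mask over the $W$ queries in the window indicating which have been processed or dropped. Transitions either process some pending query in the window (advancing time and updating the last-processed record) or mark a pending query as violated; states whose cumulative drop count exceeds $pn$ are pruned. Since all information needed to compute the next overlap and enforce TTFT constraints is contained in the state, the DP constitutes a sound search over schedules of bounded reordering distance. If any reachable terminal state is found, the corresponding schedule is necessarily percentile-feasible and is returned; otherwise, the structural lemma implies no schedule exists with zero violations, and the algorithm emits the max-infeasibility certificate.

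The running time is $n$ positions times $\Ocal(W! \cdot 2^W)$ state configurations per position; treating $T$ and $m$ as constants gives $W = \Theta(1/p)$, so by Stirling's approximation $W! \cdot 2^W = \exp(\Theta(1/p \log(1/p)))$, which yields the claimed $\Ocal(n \cdot \exp(1/p \log 1/p))$ bound. The main obstacle I anticipate is establishing the $\Ocal(1/p)$ reordering bound in the structural lemma. The subtlety is that swapping two distant queries cascades through prefix overlaps (since each swap changes the overlap with two neighbors at once) and through arrival-time constraints, so the exchange argument must track both effects simultaneously while respecting the violation budget. The cleanest route is likely by contradiction: assume an optimal schedule has a reordering larger than $W$, then show that a local swap or a targeted drop strictly improves the schedule via a potential-function argument over the remaining violation budget.
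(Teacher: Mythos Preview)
Your sliding-window DP is a genuinely different route from the paper's. The paper never bounds reordering distance and never makes drop decisions online; instead it \emph{pre-commits} to which queries will be violated. It partitions the stream by arrival time into contiguous blocks of size $n_0 = 2T/p$ and deletes the last $2T$ queries of each block, leaving a subset $\Qcal'$ of size $(1-p)n$. Because those $2T$ removed queries must span at least $T$ time units in any max-feasible schedule, consecutive blocks become temporally decoupled except for the cache state (the last processed prompt). Each block is then solved by brute-force enumeration of its $n_0!$ orderings, and the blocks are chained via a DAG over feasible (incoming cache, outgoing cache) pairs. The $\exp(1/p\log 1/p)$ factor is exactly $(2T/p)!$ from this enumeration; the $1/p$ enters only through the block size needed to keep the total drop count at $pn$.

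The gap in your proposal is the ``percentile structural lemma'' you flag as the main obstacle---and it is not merely hard, it is unnecessary. The theorem does \emph{not} ask you to find a percentile-feasible schedule whenever one exists; it asks you to either return some percentile-feasible schedule \emph{or} certify that no max-feasible schedule exists. Your own first observation already suffices: any max-feasible schedule has reordering distance $\Ocal(T)$, so a DP with window $W=\Ocal(T)$ and \emph{no drop transitions at all} will find it (with zero violations, hence percentile-feasible) whenever it exists, and otherwise correctly certifies max-infeasibility. The $\Ocal(1/p)$ additive term in $W$, the exchange argument, and the drop transitions can all be deleted. What remains to be worked out carefully is the DP state: you need to argue that the ``current completion time relative to the frontier arrival $t_i$'' takes only $\Ocal(T)$ values (collapse all states with completion time below $t_i$ into one ``idle'' state, and prune states exceeding $t_i+T$), and that the last-processed query lies within $\Ocal(T)$ indices of the frontier. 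If you pin those down, your approach actually yields $\Ocal(n)$ time under constant $T,m$, which is stronger than the stated bound.
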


\begin{proof}

    At a high level we will prove that, if there exists a schedule for $\Qcal$ satisfying the TTFT constraint $T$ for all queries, then there exists a subset $\Qcal' \subset \Qcal$ of size $(1-p)n$ that also satisfies the constraint $T$ for all queries. By the contrapositive of the statement, if $\Qcal'$ does not exist, then no such schedule exists for $\Qcal$. We show that $\Qcal'$ can be constructed by decomposing $\Qcal$ into $\Theta(n)$ subproblems, each of which can be solved in time independent from $n$, thereby providing an algorithm that is tractable with respect to $n$.

    Without loss of generality, we assume that $p \cdot n$ is an integer and that $n$ is divisible by $n_0 = \frac{2T}{p}$ for reasons we will explain later. First, we partition the query stream $\mathcal{Q}$ into $d = \frac{n}{n_0} $ disjoint blocks $\mathcal{Q}_1, \dotsc, \mathcal{Q}_d$ that are contiguous with respect to arrival time.
    
    Let this partition satisfy the constraint that for any $i < j$, every query in $\mathcal{Q}_i$ has an equal or earlier arrival time than every query in $\mathcal{Q}_j$. For each block $\mathcal{Q}_k$, we remove $2T$ queries with the latest arrival times to form the reduced block $\mathcal{Q}'_k$. Define $\Qcal' = \bigcup_{k=1}^d \Qcal_k'$.

    First, note that if there exists a feasible schedule for $\Qcal'$ under a uniform TTFT constraint $T$, then there exists a schedule for $\Qcal$ where the $(1-p)$-th percentile TTFT is at most $T$. This schedule can be constructed by following the schedule for all queries in $\Qcal'$ and then processing the remaining queries afterwards. By the contrapositive statement, if there does not exist such a schedule for $\Qcal'$, then there does not exist such a schedule for $\Qcal$.

    We will next show that it is possible to efficiently compute such a schedule for $\Qcal'$ or certify that none exists due to the decomposable nature of the problem. Since $\Qcal_k'$ was constructed by removing the last $2T$ queries of $\Qcal_k$, the latest query arrival time in $\Qcal_k'$ must be at least $T$ units of time before the arrival time of any query in $\Qcal_{k+1}'$ if a schedule for $\Qcal$ satisfying the uniform TTFT constraint exists. This is because no query is an exact prefix of another, and so it must take at least one unit of time to process a query. Therefore, if all queries in $\Qcal_k \setminus \Qcal_k'$ satisfy the TTFT constraint, then their arrival times must span at least a $T$ length interval of time.

    The partition $\{\Qcal_k\}_{k \in [d]}$ was constructed to partition the arrival times of $\Qcal$ into $d$ contiguous intervals. Then, the previous argument implies that the queries in $\Qcal_k'$ must finished being processed before the earliest arrival time in $\Qcal_{k+1}'$, and so the processing time of queries in $\Qcal'$ (the earliest arrival time to the latest completion time) under a processing order of $\Qcal$ satisfying constraint $T$ can be partitioned into $d$ contiguous time windows, each corresponding to a $\Qcal_k'$ block. However, this does not completely decompose the problem, as the processing order of $\Qcal_k'$ may still affect the the next block $\Qcal_{k+1}'$ through prefix reuse.

    We may handle this dependency by keeping track of the feasible last queries for each block that are possible under schedules that satisfy the TTFT constraint, as these dictate the potential cache states when computing the next block. More concretely, if there is a feasible schedule for $\Qcal$, then the following procedure must return a feasible schedule for $\Qcal'$:
    \begin{enumerate}
        \item Let the possible cache initialization of $\Qcal_k'$ be the feasible end queries of $\Qcal_{k-1}'$ or the empty string if $k=1$. Then, in order of $k=1,...,d$, compute all possible pairs of a cache initialization and last query processed in $\Qcal_k$ where a feasible schedule satisfying the constraint $T$ exists.
        \item Consider the queries as vertices in a graph along with a vertex representing the empty string initialization and the set of pairs computed in the last step as directed edges in this graph. Compute a path from the empty string vertex to a vertex representing a query in $\Qcal_d'$.
        \item There exists a feasible a schedule for $\Qcal'$ where the latest processed queries in each $\Qcal_k'$ is provided by the path computed in the last step. Hence, we may compute schedule for each $\Qcal_k'$ with the fixed last query with the constraint that the cache initialization and last query processed is dictated by the returned path.
    \end{enumerate}
    Note that in step one above, if the query in $\Qcal_{k-1}'$ that is processed latest under the feasible schedule for $\Qcal$ is recorded as a possible initialization of $\Qcal_k'$, then the procedure will correctly identify the last query in $\Qcal_k'$ processed under the feasible schedule for $\Qcal$ as a possible initialization for $\Qcal_{k+1}'$. Since the only feasible initialization of the cache at $\Qcal_1'$ is the empty string, by induction we conclude that step one correctly identifies tuples of feasible cache initialization and end queries for block. From this, there must exist a path returned by step two, since the sequence of tuples that occur correspond to the feasible schedule for $\Qcal$ must be a directed path from the empty string vertex to the last query processed under the schedule.

    Next, we show that the above procedure runs in $\Ocal(n \cdot \exp(1/p \cdot \log 1/p))$ time for each of the above steps:
    \begin{enumerate}
        \item To compute the set of tuples for $\Qcal_k'$, we must consider at most $n_0$ possible cache initialization and $n_0!$ orderings of queries in $\Qcal_k'$. We may verify if a fixed combination satisfies the constraint in $\Ocal(n_0)$ time. Therefore, computing set of tuples for a fixed $k$ has the following time complexity:
        \begin{gather*}
            \Ocal(n_0^2 \cdot n_0!)
            = \Ocal(n_0^{n_0 + 2})
            = \exp(n_0 \cdot \log n_0)
            = \exp(1/p \cdot \log 1/p).
        \end{gather*}
        Performing this procedure for each $k$ then takes $\Ocal(n \cdot \exp(1/p \cdot \log 1/p))$ time.
        \item For calculating the path, each directed edge is either between a query in $\Qcal_k'$ to a query in $\Qcal_{k+1}'$ or from the empty string vertex to a query in $\Qcal_1'$. Hence, the directed graph is acyclic, and the degree of every vertex is at most $n_0$. Therefore, finding a path from the empty set vertex to a vertex corresponding to a query in $\Qcal_d'$ can be accomplished by BFS in $\Ocal(n \cdot \poly(n_0))$ time.
        \item Finally, given a feasible sequence of last processed queries for each $\Qcal_k'$, we may compute the schedule for the other queries in $\Qcal_k'$ while considering a fixed cache initialization and last processed query. This can be done by evaluating all possible schedules in $\Ocal(n_0!) = \exp(1/p \cdot \log 1/p)$ time. Doing this for all $d$ blocks then takes $\Ocal(n \cdot \exp(1/p \cdot \log 1/p))$ time.
    \end{enumerate}
    
    Note that computing the partition $\{\Qcal_k\}_{k \in [d]}$ depends solely on the arrival times and so it can be done in $\Ocal(n)$ time. Finally, to complete the proof, note that we do not need to remove the $2T$ last queries in $\Qcal_d$ to construct $\Qcal_d'$, as no other blocks will become dependent on it. Hence, we may adjust the argument by a constant factor, and the assumption that $n$ is exactly divisible by $\frac{2T}{p}$ is not needed. 
\end{proof}

\section{Proofs for Section \ref{sxn:klpm}}\label{sxn:klpm_proofs}

\subsection{Proof of Theorem~\ref{thm:klpm_separation}}
\begin{proof}

\textbf{LPM:} Label the queries so that $1,2,\dots,n$ are in ascending order of arrival times, i.e.\ $t_1\le t_2\le \cdots \le t_n$.  Let $\sigma(\cdot)$ be the permutation specifying the \emph{processing order} under LPM.  We show that with high probability, \emph{some} query among the \emph{earliest arrivals} (say indices $[q]$) is processed in a \emph{very late position} ($> j$).

Concretely, for integers $q<j$, define the event
\[
\bigl\{\exists\,i \in [q] : \sigma(i)> j\bigr\}
\;=\;
\bigl\{\text{some earliest-$q$ arrival is not processed among the first $j$ positions}\bigr\}.
\]
Since the processing order of LPM with uniformly random tie breaking is independent from the arrival times, $\sigma$ is a uniform random permutation of $[n]$. A standard combinatorial bound then gives:
\[
\PP\!\Bigl(\,\forall i\in[q],\,\sigma(i)\le j\Bigr)
~\le~
\prod_{r=0}^{q-1} \frac{j-r}{n-r}
~\le~
\Bigl(\frac{j}{n}\Bigr)^q.
\]
We set $q = n^{3/4}$ and $j = n - n^{1/2}$. Then,
\begin{align*}
    \PP(\exists_{i \in [q]} ~ \sigma(i) > j)
    &\geq 1 - \Big(\frac{n - n^{1/2}}{n}\Big)^{n^{3/4}} \\
    &= 1 - (1 - 1/n^{1/2})^{n^{3/4}} \\
    &= 1 - ((1 - 1/n^{1/2})^{n^{1/2}})^{n^{1/4}}.
\end{align*}
By the known limit $\lim_{x \rightarrow \infty} (1 - \frac{1}{x})^x = \frac{1}{e}$, we can conclude that there exists $n_0 \in \mathbb{N}$ such that, for any $n \geq n_0$, $\PP(\exists_{i \in [q]} ~ \sigma(i) > j) \geq 1 - \delta$.

For a fixed ordering $\xb_{i_1},...,\xb_{i_n}$, the time at which the $j$-th query is finished being processed can be written as:
\begin{gather}\label{eqn:q_to_j_prob}
    \ttft_{i_j} + t_{i_j} = T + \lceil j / k \rceil \cdot u + d \cdot j,
\end{gather}
since under LPM, $\lceil j / k \rceil$ unique user prefixes and $j$ unique document suffixes will be processed.

If the event occurs, then there must be a query of index $i \in [q]$ that has not been processed at time $T + \lceil j/k \rceil \cdot u + d \cdot j$. Since $t_i \leq q \cdot s$ for all $i \in [q]$, this implies there exists $i \in [q]$ such that:
\begin{gather*}
    \ttft_i \geq T + \lceil j/k \rceil \cdot u + d \cdot j - q \cdot s \\
    \geq T + j(u/k + d) - qs
\end{gather*}
By substituting in the values $q = n^{3/4}$ and $j = n - n^{1/2}$, we conclude the theorem bound for LPM in the theorem statement.

\textbf{FCFS: } Label the queries so that $1,2,\dots,n$ are in ascending order of arrival times, i.e.\ $t_1\le t_2\le \cdots \le t_n$. FCFS processes these queries in the order $1,2,\dots,n$.

Under the $\cattn=0$, query~$i$'s computational time equals:
\small{
\[
\begin{cases}
(u + d)\,, 
&\text{if the user prefix differs from that of query $i-1$,}\\[6pt]
d\,, 
&\text{if query $i$ has the same user prefix as query $i-1$.}
\end{cases}
\]
}
(For $i=1$, there is no previous query, so the time cost is always $u + d$.)

\smallskip

Define an \emph{indicator} variable 
\[
I_i 
~\;=\;
\begin{cases}
1, & \text{if queries $i$ and $i-1$ share the same user prefix},\\
0, & \text{otherwise}.
\end{cases}
\]
Note that $\EE[I_i] = \frac{k-1}{n-1}$ for all $i=2,...,n$, since the probability that the $i$ and $(i-1)$-th queries share the same $\user$ prefix is $\frac{k-1}{n-1}$. Then, the time needed to process the entire queue is given by:
\begin{gather*}
    \ttft_{n}
    = T + |\xb_{1}| + \sum_{i=2}^n  (|\xb_{i}| - uI_i) - t_n \\
    = T + n(u + d) - sn - u\sum_{i=2}^n  I_i.
\end{gather*}
Markov's inequality states that $\PP(X \geq a) \leq \frac{\EE[X]}{a}$, where $X$ is a non-negative value and $a > 0$.  Applying this to $X = \sum_{i=2}^n  I_i$ with respect to randomness in the the queue order implies:
\begin{gather*}
    \PP\Big(\sum_{j=2}^n I_i \geq \sqrt{n}\Big) \leq \frac{(k-1)\sqrt{n}}{n-1}
    \leq 
    \frac{k}{\sqrt{n}}.
\end{gather*}
This implies that as $n \rightarrow \infty$, $\frac{1}{n}\sum_{i=2}^n I_i$ converges to zero in probability. Then the formula can be written as:
\begin{gather*}
    \ttft_n \geq T + n(u + d - s - \epsilon_n'),
\end{gather*}
where $\epsilon_n'$ converges to zero in probability as $n \rightarrow \infty$. By the relation $\max_{i\in[n]}(\ttft_i) \geq \ttft_n$, we conclude the lower bound for the FCFS algorithm in the theorem statement.

\textbf{$k$-LPM:} First, note that all queries in the queue must finish being processed by $t = T + n(\frac{u}{k} + d)$. Additionally, the $i$-th query must be 
processed by time $T + i(u + kd)$, since this is the time needed to complete $i$ groups of user queries, and hence complete $i$ FCFS steps after the start time $T$.

\begin{align*}
    \ttft_i &\leq \min\{T + i(u+kd) - t_i, ~ T +  n(\frac{u}{k} + d) - t_i \} \\
    = &\min\{T + i(u+kd) - i \cdot s, ~ T + n(\frac{u}{k} + d) - i \cdot s \} \\
    = &\min\{\,T + i(u + kd - s), ~ T + n(\frac{u}{k} + d) - i s\}.
\end{align*}

We want a uniform bound for all \(i \in [n]\). Hence, we may bound over the maximum of all indices and then relax the domain to the entire real line.
\begin{align*}
\ttft_i &\leq \max_{i \in [n]} \min\{\,T+i(u + kd - s),\, T+ n(\frac{u}{k} + d) - i s\} \\
&\leq
\max_{i \in \mathbb{R}}
\min\{\,T+ i(u + kd - s),\,T+n(\frac{u}{k} + d) - i s\}.    
\end{align*}

Observe that 
\(f_1(i) = T + i(u + kd - s)\) is increasing in \(i\), 
while 
\(f_2(i) = T + n(\frac{u}{k} + d) - i s\) is decreasing in \(i\). 
The maximum of \(\min\{f_1(i), f_2(i)\}\) occurs where these two lines intersect. We solve 
$$
i(u + kd - s) = n\!\bigl(\frac{u}{k} + d\bigr) - i s,
$$
which implies,
$$
i(u + kd) = n\!\bigl(\frac{u}{k} + d\bigr).
$$
\[
\Rightarrow i^*
\;=\;
\frac{\,n\bigl(\frac{u}{k} + d\bigr)\,}{\,u + kd\,}
\;=\;
\frac{n}{k}.
\]
Plugging \(i^*\) into \(f_1\) and \(f_2\) yields
\[
f_1\bigl(i^*\bigr)
\;=\;
\frac{n}{k}\,\bigl(u + kd - s\bigr),
\quad
f_2\bigl(i^*\bigr)
\;=\;
\frac{n}{k}\,\bigl(u + kd - s\bigr),
\]
so
\[
\max_{\,i\in \mathbb{R}}\,
\min\{f_1(i),\,f_2(i)\}
\;=\;
\frac{n}{k}\,\bigl(u + kd - s\bigr).
\]
Hence for all \(i\in [n]\),
\begin{align*}
\ttft_i
\;&\le\;
\max_{\,i \in [n]}\,
\min\Bigl\{\,i\bigl(u + kd - s\bigr),\,n\Bigl(\tfrac{u}{k}+d\Bigr) - i\,s \Bigr\} \\
\;&\leq\;
\frac{n}{k}\,\bigl(u + kd - s\bigr).
\end{align*}

\end{proof}

\subsection{Proof of Corollary~\ref{corollary:klpm_better}}
\begin{proof}

From the conditions that $k > 1$ and $s > 0$, it follows that:
\begin{gather*}
     T + n(\frac{u}{k} + d - \frac{s}{k}) < T + n(\frac{u}{k} + d).
\end{gather*}
Furthermore, from the condition that $u > s$, it follows that:
\begin{gather*}
    T + n(\frac{u}{k} + d - \frac{s}{k}) < T + n(u + d - s).
\end{gather*}
Hence, by Theorem \ref{thm:klpm_separation}, the corollary statement holds when $\cattn=0$. We extend to all $\cattn > 0$ by observing that $|\xb_i| = u + d$ is constant under Definition \ref{def:shuffled_queue}. Therefore, the time needed to process a query of the regular arrival shuffled queue under Definition \ref{def:llm_computation} with order indexed by $j > 1$ is:
\begin{gather*}
    (1 + \cattn|\xb_j|)(|\xb_j| - \overlap(\xb_j, \xb_{j-1}))
    \\= (1 + c)(|\xb_j| - \overlap(\xb_j, \xb_{j-1})),
\end{gather*}
for some constant $c$ that depends only on $u$ and $d$. Rescaling $s$ by this $1 + c$ term extends the result to all $\cattn > 0$ and concludes the proof.

\end{proof}

\section{Additional experiment results}\label{sxn:additional_experiments}

\subsection{Real prompt distribution latency metrics} 

In this section, we provide additional plots for latency metrics corresponding to the experiment of Figure \ref{fig:request_rate_ttft_all_k}.

Note that the relative performance of the scheduling algorithms is accurately predicted by Theorem \ref{thm:klpm_separation} and surrounding discussion. That is, LPM (equivalent to $k$-LPM with $k=\infty$) attains the best median TTFT as it maximizes prefix reuse. However, $k$-LPM with intermediate values of $k$ attains nearly the same median TTFT and better P99 TTFT by ensuring that no queued query is waiting for too long.

\begin{figure}[h]
  \centering

  \begin{subfigure}[t]{.48\textwidth}
    \centering
    \includegraphics[width=\textwidth]{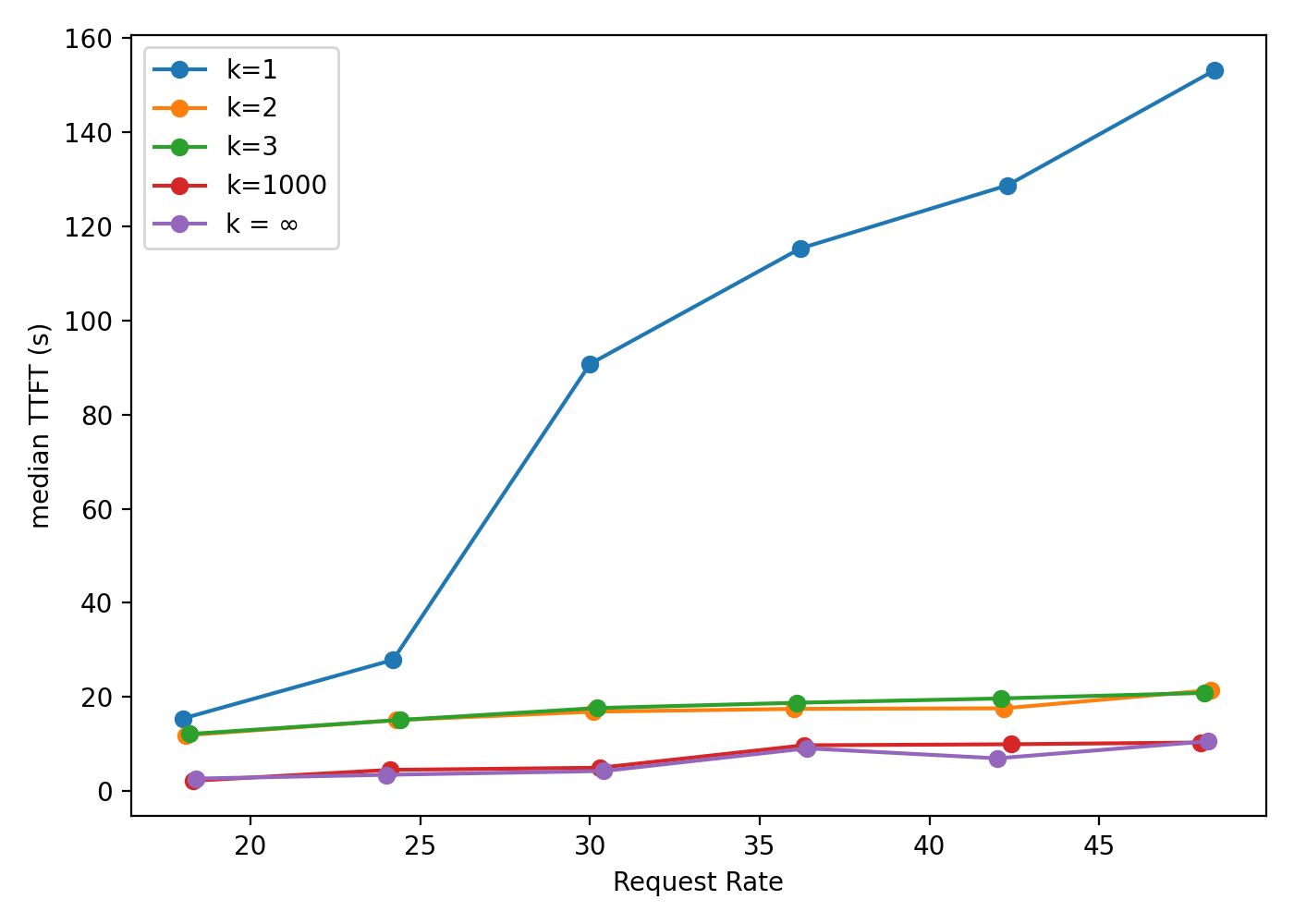}
    \caption{Median TTFT}
  \end{subfigure}\hfill
  \begin{subfigure}[t]{.48\textwidth}
    \centering
    \includegraphics[width=\textwidth]{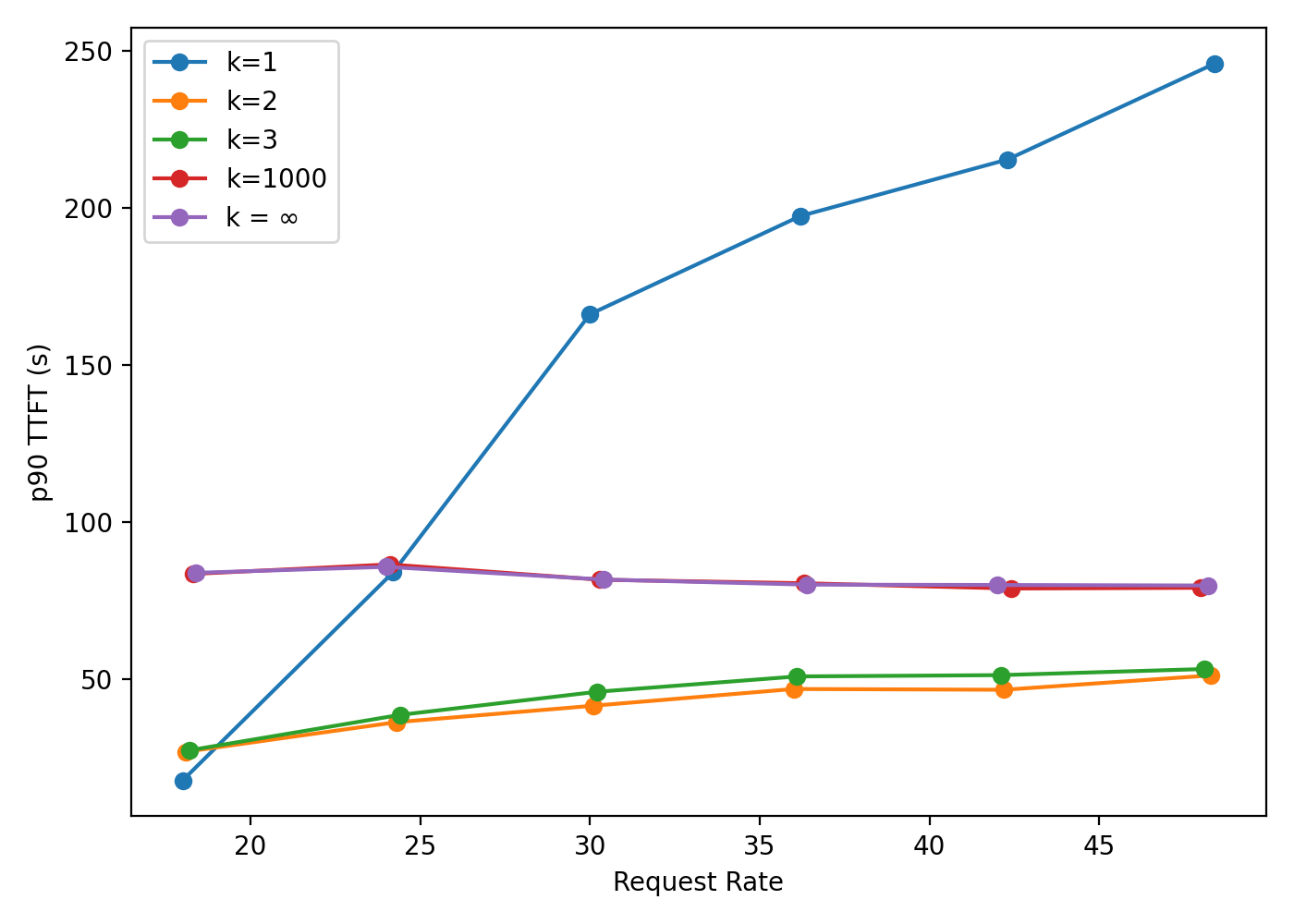}
    \caption{P90 TTFT}
  \end{subfigure}

  \vspace{0.6em}

  \begin{subfigure}[t]{.48\textwidth}
    \centering
    \includegraphics[width=\textwidth]{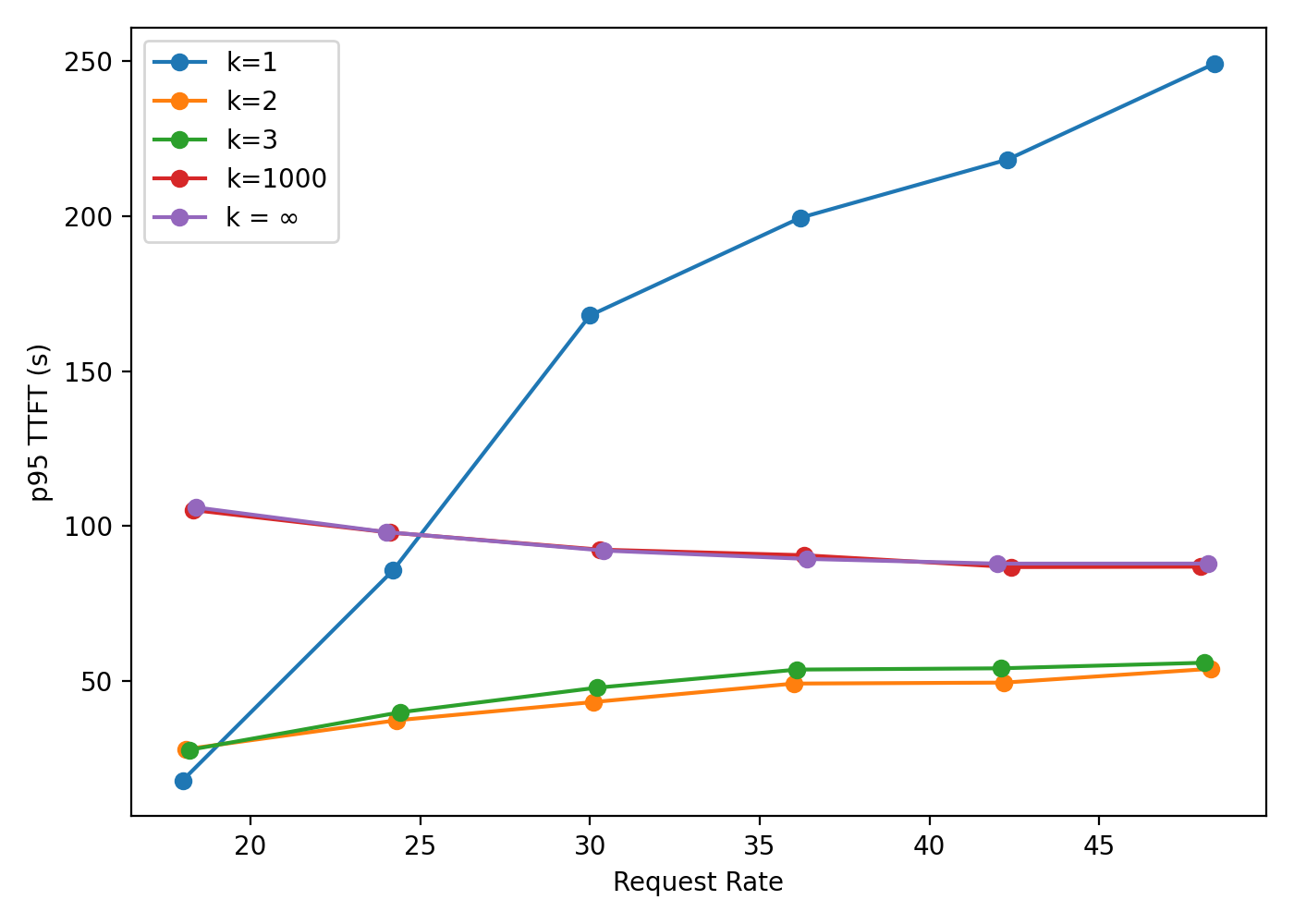}
    \caption{P95 TTFT}
  \end{subfigure}\hfill
  \begin{subfigure}[t]{.48\textwidth}
    \centering
    \includegraphics[width=\textwidth]{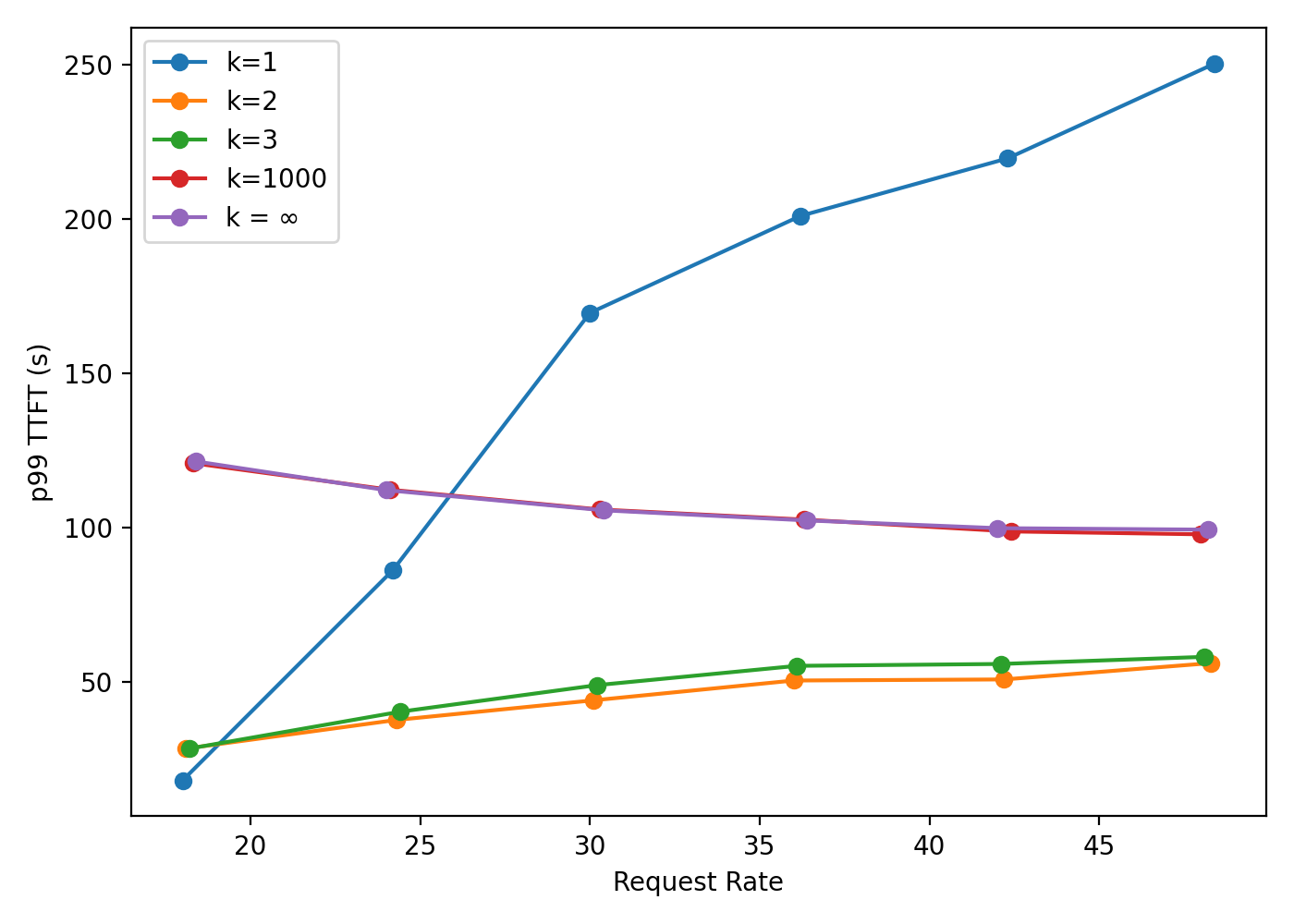}
    \caption{P99 TTFT}
  \end{subfigure}

  \caption{We measure P50, P90, P95, and P99 TTFT versus request rate for five values of the hyperparameter $k$ on 2000 randomly shuffled prompts from the usecase described in \cite{360brew}. Note that $k=1$ corresponds to FCFS and $k=\infty$ corresponds to LPM.}
  \label{tab:prefix_length}
\end{figure}

\subsection{Varying prefix length ratio}

In this section, we provide experiments which vary $u$ and $d$ in the Regular Arrival Shuffled Queue of Definition \ref{def:shuffled_queue}\footnote{Note that we generate the prefix and suffix as random strings of the specified length, which results in non-zero negligible reuse between unique prefix or suffix substrings.}. We set $n=2000$, $k=4$ (the number of replications per prefixes), and measure at 80 requests per second. For this experiment, we use a single H100 GPU and the Meta-Llama-3.2-1B-Instruct model \cite{dubey2024llama}.

Table \ref{tab:prefix_length} shows latency results for $k$-LPM with $k$ equal to $1$, $4$, and $\infty$ respectively, while varying the ratio of the replicated prefix to the constant prompt length (see Definition \ref{def:shuffled_queue}). These results match the relative performance of the settings of $k$ predicted by Theorem \ref{thm:klpm_separation}. That is, LPM consistently results in lower median TTFT due to greater prefix reuse. However, when the ratio of the prefix length to the prompt length ($u$ to $u + d$) is small, FCFS achieves lower TTFT at high percentiles by avoiding any query from waiting for too long. As the the ratio of the prefix length to the prompt length increases, the impact of better prefix reuse becomes greater, and so LPM performs better even regarding higher percentile TTFT. Finally, $k$-LPM with intermediate $k$ never performs simultaneously worse than FCFS and LPM, while it performs better than both in the intermediate regime.

\begin{table}[h]
  \centering
  \caption{TTFT timings in seconds while varying the ratio of the repeated prefix to a constant total string length.}
  \label{tab:policy-stats-no-mean}

  % ===== FCFS =====
  \begin{subtable}[t]{0.8\linewidth}
    \centering
    \caption{FCFS}
    \label{tab:fcfs}
    \begin{tabular}{
      S[table-format=4.0]  % u
      S[table-format=4.0]  % d
      S[table-format=5.2]  % Median
      S[table-format=5.2]  % P90
      S[table-format=5.2]  % P95
    }
      \toprule
      {$u$} & {$d$} & {Median} & {P90} & {P95} \\
      \midrule
      1000 & 5000 & 3466.21 & 5604.33 & 5843.80 \\
      3000 & 3000 &  509.93 & 1040.45 & 1081.62 \\
      5000 & 1000 &   20.15 &  524.91 &  564.44 \\
      \bottomrule
    \end{tabular}
  \end{subtable}

  \vspace{0.8em}

  % ===== k-LPM =====
  \begin{subtable}[t]{0.8\linewidth}
    \centering
    \caption{k-LPM}
    \label{tab:k-lpm}
    \begin{tabular}{
      S[table-format=4.0]
      S[table-format=4.0]
      S[table-format=5.2]
      S[table-format=5.2]
      S[table-format=5.2]
    }
      \toprule
      {$u$} & {$d$} & {Median} & {P90} & {P95} \\
      \midrule
      1000 & 5000 & 1940.35 & 8610.27 & 9128.23 \\
      3000 & 3000 &  177.40 &  821.41 & 1647.49 \\
      5000 & 1000 &   19.61 &  273.55 &  447.37 \\
      \bottomrule
    \end{tabular}
  \end{subtable}

  \vspace{0.8em}

  % ===== LPM =====
  \begin{subtable}[t]{0.8\linewidth}
    \centering
    \caption{LPM}
    \label{tab:lpm}
    \begin{tabular}{
      S[table-format=4.0]
      S[table-format=4.0]
      S[table-format=5.2]
      S[table-format=5.2]
      S[table-format=5.2]
    }
      \toprule
      {$u$} & {$d$} & {Median} & {P90} & {P95} \\
      \midrule
      1000 & 5000 &  677.07 & 11297.53 & 14733.45 \\
      3000 & 3000 &  176.32 &  1245.15 &  2821.72 \\
      5000 & 1000 &   20.70 &   216.88 &   565.92 \\
      \bottomrule
    \end{tabular}
  \end{subtable}

\end{table}

\end{document}